\documentclass[12pt]{amsart}

\usepackage{enumerate}
\usepackage{amsmath}
\usepackage{amssymb}
\usepackage{amsthm}
\usepackage{booktabs}
\usepackage{caption}
\captionsetup{tableposition=top, figureposition=bottom}
\usepackage{graphicx}
\usepackage{hyperref}
\usepackage{url}

\addtolength{\oddsidemargin}{-.5in}
\addtolength{\evensidemargin}{-.5in}
\addtolength{\textwidth}{1in}
\addtolength{\textheight}{1.3in}
\addtolength{\topmargin}{-.8in}

\theoremstyle{plain}
\newtheorem{theorem}{Theorem}[section]

\theoremstyle{definition}

\theoremstyle{remark}
\newtheorem*{remark}{Remark}

\newcommand{\pto}{\overset{P}{\to}}
\newcommand{\dto}{\overset{\mathcal{D}}{\to}}

\begin{document}

\title{Measurable Taylor's Theorem: An Elementary Proof}
\author{Gianluca Viggiano*}
\date{May 5, 2023}

\thanks{Keywords: Taylor Expansion, Measurable Function, Asymptotic Analysis, Delta Method}
\thanks{MSC2020 subject classifications: 28A20, 60A05}

\thanks{* Bank of Italy, Regional Economic Research, Milan. E-mail: \href{mailto:gianluca.viggiano@bancaditalia.it}{gianluca.viggiano@bancaditalia.it}}

\maketitle

\begin{abstract}
The Taylor expansion is a widely used and powerful tool in all branches of Mathematics, both pure and applied. In Probability and Mathematical Statistics, however, a stronger version of Taylor's classical theorem is often needed, but only tacitly assumed. In this note, we provide an elementary proof of this \emph{measurable} Taylor's theorem, which guarantees that the interpolating point in the Lagrange form of the remainder can be chosen to depend measurably on the independent variable.
\end{abstract}

\section{Introduction}
\label{sec:intro}
In Real Analysis Taylor's theorem with the Lagrange form of the remainder states that if $f: \mathbb{R} \to \mathbb{R}$ is differentiable $k+1$ times, then it can be globally approximated by its Taylor polynomial at any point, namely for any $x, c \in \mathbb{R}$ there exists a point $\xi$ lying between $c$ and $x$ such that
\begin{equation}
\label{eq:RealTaylor}
 f(x) = T^k_c(x) + \frac{f^{(k+1)}(\xi)}{(k+1)!}(x-c)^{k+1} 
\end{equation}
where $T^k_c(x) = \sum_{j=0}^k \frac{f^{(j)}(c)}{j!}(x-c)^j$. The theorem extends easily to $\mathbb{R}^N$, assuming $f$ is \emph{continuously} differentiable $k+1$ times (see Theorem~\ref{thm:RTTlagrange} below). Of course, the interpolating point $\xi$ need not be unique, and depends on all parameters $c, k, f$ and $x$. 
For fixed $c, k$ and $f$, $x \mapsto \Xi(x)  \equiv \{\xi \text{ such that } \eqref{eq:RealTaylor} \text{ holds}\}$ is then a non-empty correspondence. Using the Axiom of Choice, we can extract a \emph{selection} from $\Xi$, \emph{i.e.}, a function $\xi$ such that $\xi(x) \in \Xi(x)$ for all $x$, but nothing can be said \emph{a priori} about any such function's local or global properties. In many applications, however, some degree of regularity is needed. In Probability Theory, in particular, the existence of a \emph{measurable} selection is often needed and usually assumed without proof.

An illuminating example is the \emph{folklore} proof of the delta method in Mathematical Statistics (see for example \cite[p.~88]{jiang2010large}): suppose $(\mathbf{X}_n)_{n \geq 1}$ is a sequence of random $N-$vectors such that $\sqrt{n}(\mathbf{X}_n - \mathbf{c}) \dto \mathbf{X}$ for some $\mathbf{c} \in \mathbb{R}^N$ and random $N$-vector $\mathbf{X}$. If $f: \mathbb{R}^N \to \mathbb{R}$ is a continuously differentiable function, we can study the asymptotic behavior of $(f(\mathbf{X}_n))_{\geq 1}$ with a first-order Taylor expansion:
\begin{equation}\label{eq:deltam}
\sqrt{n} [f(\mathbf{X}_n) - f(\mathbf{c})] = \nabla f(\boldsymbol{\xi}_n)^T \sqrt{n} (\mathbf{X}_n - \mathbf{c})
\end{equation}
where $\boldsymbol{\xi}_n$ is a point on the segment between $\mathbf{c}$ and $\mathbf{X}_n$. If we assume a random sequence of such $\boldsymbol{\xi}_n$ exists, then it is easy to prove that $\nabla f(\boldsymbol{\xi}_n) \pto \nabla f(\mathbf{c})$ and conclude that $\sqrt{n} [f(\mathbf{X}_n) - f(\mathbf{c})]) \dto \nabla g(\mathbf{c})^T \mathbf{X}$.
This however does not follow in any way from Taylor's classical theorem. In fact, if the $\mathbf{X}_n$ are defined on some sample space $(\Omega, \mathcal{F})$, what Taylor's theorem actually guarantees is that for each $n\in \mathbb{N}$ and $\omega \in \Omega$ we can find a $\boldsymbol{\xi}(n, \omega) \equiv \boldsymbol{\xi}(\mathbf{X}_n(\omega), f, \mathbf{c})$ satisfying~\eqref{eq:deltam}. There is no assurance that the map $\omega \mapsto\boldsymbol{\xi}(n, \omega)$ is actually a random variable, \emph{i.e.} that it is measurable.

For an example from Probability, we can look at the classical proof of It\={o}'s lemma of \cite[p.~171]{karatzas1991brownian}. If $f: \mathbb{R} \to \mathbb{R}$ is of class $C^2$ and $(X_t)_{t \geq 0}$ is a continuous semimartingale, the first step of the proof is to choose a partition $0 = t_0 < t_1 < \cdots < t_N = t$ and use a second-order Taylor expansion to write:
\begin{multline*}
f(X_t) - f(X_0) = \sum_{n=1}^N \{ f(X_{t_n}) - f(X_{t_{n-1}})\} \\
=\sum_{n=1}^N f'(X_{t_{n-1}})(X_{t_n} - X_{t_{n-1}}) +
\frac{1}{2} \sum_{n=1}^N f''(\xi_n) (X_{t_n} - X_{t_{n-1}})^2
\end{multline*}
for some $\xi_n$ lying between $X_{t_{n-1}}$ and $X_{t_n}$. Again, the authors assume that each $\xi_n$ is a random variable as in subsequent steps they use bounds involving its second moments. Notice that in this case we are making the even stronger assumption that the Lagrange interpolator depends measurably on both $c$ and $x$.

To the best of our knowledge \cite[p. 604]{charalambos2013infinite} is the only work to provide a measurable extension of Taylor's theorem. Using advanced results from the theory of measurable selections, the authors prove what they call a \emph{Stochastic Taylor's Theorem} for real functions defined on bounded intervals---essentially a special case of Theorem~\ref{thm:MTTlagrange} below.
The proof appeals to non-trivial results like Kuratowski–Ryll-Nardzewski Selection Theorem and Filippov's Implict Function Theorem, making it accessible only to a very specialized readership. \cite{yang2021note} extends the result to the bi-dimensional case, but the approach does not generalize naturally to arbitrary dimension and still leans on Filippov's Theorem.

We think that providing a clear statement and a reasonably simple proof of this measurable Taylor's theorem can encourage its use at both  intermediate and advanced levels, improving proofs' readability and rigor. In the following, we provide a direct proof of the measurability of the Lagrange interpolator that presupposes only intermediate knowledge of Measure Theory.

\section{Measurable Taylor's Theorem}
\label{sec:MTT}
We will prove the theorem directly in its multivariate form. Below we recall the general statement of Taylor's theorem using multi-index notation (for details see \cite{lang2001real} and \cite{loom2014calc}).

\begin{theorem}[Taylor]
\label{thm:RTTlagrange}
Let $k \in \mathbb{N}$, $U \subseteq \mathbf{R}^N$ open, $f:U \to \mathbb{R}$ a function of class $C^{k+1}$, and $\mathbf{c} \in U$. If $\mathbf{x} \in U$ is such that the segment of endpoints $\mathbf{x}$ and $\mathbf{c}$ is contained in $U$, then there exists a $\boldsymbol{\xi}$ lying on this segment such that the following \emph{Taylor formula with the Lagrange form of the remainder} holds:

\begin{equation}
\label{eq:RTTlagrange}
 f(\mathbf{x}) = T_{\mathbf{c}}^k f(\mathbf{x}) + 
 \sum_{|\alpha| = k+1} \frac{1}{\alpha!} D^{\alpha}f(\boldsymbol{\xi})(\mathbf{x}-\mathbf{c})^{\alpha}
\end{equation}
where  $T_{\mathbf{c}}^k f (\mathbf{x}) = \sum_{|\alpha| \leq k} \frac{1}{\alpha!} D^{\alpha} f(\mathbf{c})(\mathbf{x} - \mathbf{c})^{\alpha}$.
\end{theorem}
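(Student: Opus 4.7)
The plan is to reduce the multivariate statement to the one-dimensional Taylor theorem with Lagrange remainder by restricting $f$ to the segment joining $\mathbf{c}$ and $\mathbf{x}$. Concretely, I would parametrize the segment by $\gamma(t) = \mathbf{c} + t(\mathbf{x}-\mathbf{c})$ for $t \in [0,1]$ and set $g(t) = f(\gamma(t))$. Since the image of $\gamma$ lies in the open set $U$ and $f \in C^{k+1}(U)$, the composite $g$ is of class $C^{k+1}$ on an open neighborhood of $[0,1]$. Applying the classical one-dimensional Taylor theorem to $g$ at the point $0$, evaluated at $1$, delivers some $\tau \in (0,1)$ with
\begin{equation*}
 g(1) = \sum_{j=0}^{k} \frac{g^{(j)}(0)}{j!} + \frac{g^{(k+1)}(\tau)}{(k+1)!}.
\end{equation*}

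The next step is to rewrite $g^{(j)}(t)$ in multi-index form. By repeated application of the chain rule, together with Schwarz's theorem on the equality of mixed partial derivatives (valid since $f \in C^{k+1}$), one obtains the identity
\begin{equation*}
 g^{(j)}(t) \;=\; \sum_{|\alpha| = j} \frac{j!}{\alpha!}\, D^{\alpha} f(\gamma(t))\,(\mathbf{x}-\mathbf{c})^{\alpha},
 \qquad 0 \le j \le k+1.
\end{equation*}
I would prove this by induction on $j$: the case $j=1$ is just the directional-derivative form of the chain rule, and the inductive step uses the multinomial/Leibniz identity $\binom{j+1}{\alpha} = \sum_{i : \alpha_i \ge 1} \binom{j}{\alpha - e_i}$ applied to the derivatives of the summands.

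Finally, I would substitute into the one-dimensional expansion: evaluating the closed-form expression for $g^{(j)}(0)$ gives exactly $j!$ times the $|\alpha|=j$ layer of $T_{\mathbf{c}}^{k} f(\mathbf{x})$, and evaluating $g^{(k+1)}(\tau)$ yields the desired Lagrange remainder with $\boldsymbol{\xi} = \gamma(\tau)$, which by construction lies on the open segment between $\mathbf{c}$ and $\mathbf{x}$. The main obstacle is the bookkeeping in the multi-index expansion of $g^{(j)}$; everything else is a direct transcription of the one-dimensional theorem and poses no analytic difficulty, especially since $C^{k+1}$ regularity side-steps any worry about the order of partial differentiation.
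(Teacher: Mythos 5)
The paper does not prove this theorem: it is recalled as classical background, with the reader referred to Lang and to Loomis--Sternberg for details. Your argument --- parametrizing the segment, applying one-dimensional Taylor to $g(t)=f(\mathbf{c}+t(\mathbf{x}-\mathbf{c}))$, and converting $g^{(j)}$ into multi-index form via the identity $g^{(j)}(t)=\sum_{|\alpha|=j}\frac{j!}{\alpha!}D^{\alpha}f(\gamma(t))(\mathbf{x}-\mathbf{c})^{\alpha}$ --- is exactly the standard proof given in those references, and the inductive step you sketch (the multinomial Pascal identity) closes correctly, so the proposal is sound.
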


Adopting the same notation, we can now state the following:

\begin{theorem}[Measurable Taylor's Theorem]
\label{thm:MTTlagrange}
Let $k \in \mathbb{N}$, $U \subseteq \mathbb{R}^N$ open and star-convex with respect to $\mathbf{c} \in U$, and $f: U \to \mathbb{R}$ of class $C^{k+1}$. Then there exists a Borel-measurable function $\boldsymbol{\xi}: U \to U$ such that, for all $\mathbf{x} \in U$, $\boldsymbol{\xi}(\mathbf{x})$ lies on the segment of endpoints $\mathbf{x}$ and $\mathbf{c}$, and satisfies

\begin{equation}
\label{eq:lagXi}
 f(\mathbf{x}) = T_{\mathbf{c}}^k f(\mathbf{x}) + 
  \sum_{|\alpha| = k+1} \frac{1}{\alpha!} D^{\alpha}f(\boldsymbol{\xi}(\mathbf{x}))(\mathbf{x}-\mathbf{c})^{\alpha}
\end{equation}
Moreover, $\boldsymbol{\xi}$ can be chosen to be continuous at $\mathbf{c}$.
\end{theorem}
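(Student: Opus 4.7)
The plan is to parameterize the segment between $\mathbf{c}$ and $\mathbf{x}$ as $\mathbf{c} + t(\mathbf{x}-\mathbf{c})$ with $t \in [0,1]$, reducing the problem to producing a Borel-measurable selector $t: U \to [0,1]$. Using star-convexity with respect to $\mathbf{c}$, the map
\[
g(\mathbf{x},t) := \sum_{|\alpha|=k+1} \frac{1}{\alpha!}\, D^\alpha f\bigl(\mathbf{c} + t(\mathbf{x}-\mathbf{c})\bigr)(\mathbf{x}-\mathbf{c})^\alpha
\]
is well defined and continuous on $U \times [0,1]$, and the Lagrange remainder $R(\mathbf{x}) := f(\mathbf{x}) - T_{\mathbf{c}}^k f(\mathbf{x})$ is continuous on $U$. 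By Theorem~\ref{thm:RTTlagrange}, the fiber $S(\mathbf{x}) := \{t \in [0,1] : g(\mathbf{x},t) = R(\mathbf{x})\}$ is nonempty for every $\mathbf{x} \in U$, and continuity of $g(\mathbf{x},\cdot) - R(\mathbf{x})$ makes it closed in $[0,1]$.

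I would then define the canonical selector $t(\mathbf{x}) := \inf S(\mathbf{x})$, which belongs to $S(\mathbf{x})$ by closedness, and put $\boldsymbol{\xi}(\mathbf{x}) := \mathbf{c} + t(\mathbf{x})(\mathbf{x}-\mathbf{c})$. By construction $\boldsymbol{\xi}(\mathbf{x})$ lies on the prescribed segment (hence in $U$ by star-convexity) and satisfies \eqref{eq:lagXi}.

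The substantive step is the Borel-measurability of $t$; my plan is to establish the stronger property that $t$ is lower semicontinuous. For any $a \in [0,1]$, the relation $t(\mathbf{x}) \leq a$ is equivalent to $S(\mathbf{x}) \cap [0,a] \neq \emptyset$, i.e., to $\psi_a(\mathbf{x}) := \min_{s \in [0,a]} |g(\mathbf{x},s) - R(\mathbf{x})| = 0$. Joint continuity of $g - R$ and compactness of $[0,a]$ force $\psi_a$ to be continuous on $U$ by a standard parametric continuity argument. Therefore $\{\mathbf{x} \in U : t(\mathbf{x}) \leq a\} = \{\psi_a = 0\}$ is closed, which yields the lower semicontinuity of $t$ and hence the Borel-measurability of both $t$ and $\boldsymbol{\xi}$.

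Finally, continuity of $\boldsymbol{\xi}$ at $\mathbf{c}$ is essentially automatic: since $t(\mathbf{x}) \in [0,1]$, we have $\|\boldsymbol{\xi}(\mathbf{x}) - \mathbf{c}\| \leq \|\mathbf{x} - \mathbf{c}\|$, so $\boldsymbol{\xi}(\mathbf{x}) \to \mathbf{c} = \boldsymbol{\xi}(\mathbf{c})$ as $\mathbf{x} \to \mathbf{c}$ irrespective of how $t$ oscillates. I do not anticipate a real obstacle; the only point demanding care is the continuity of $\psi_a$, which rests on uniform continuity of $g - R$ on compact subsets of $U \times [0,1]$ together with compactness of $[0,a]$.
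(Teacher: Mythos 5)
Your proposal is correct, and it selects exactly the same point as the paper---both define the interpolating parameter as the infimum of the zero set of the (identical) function $F(\mathbf{x},t) = R(\mathbf{x}) - g(\mathbf{x},t)$ and set $\boldsymbol{\xi}(\mathbf{x}) = \mathbf{c} + t(\mathbf{x})(\mathbf{x}-\mathbf{c})$---but the arguments for measurability, which is the entire technical content of the theorem, are genuinely different. The paper approximates $\tau(\mathbf{x}) = \inf\{t \in [0,1] : F(\mathbf{x},t)=0\}$ by the functions $\tau_n(\mathbf{x}) = \inf\{q \in \mathbb{Q}\cap[0,1] : |F(\mathbf{x},q)| < 1/n\}$, shows each $\tau_n$ is measurable (indeed upper semicontinuous, its strict sublevel sets being countable unions of open sets), and verifies that $\tau_n$ increases pointwise to $\tau$. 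You instead prove directly that the sublevel set $\{t \leq a\}$ equals the zero set of the continuous function $\psi_a(\mathbf{x}) = \min_{s\in[0,a]}|g(\mathbf{x},s)-R(\mathbf{x})|$, hence is closed; this Berge-type parametric-minimum argument is sound (the equivalence $t(\mathbf{x})\leq a \iff S(\mathbf{x})\cap[0,a]\neq\emptyset$ does use that the closed fiber attains its infimum, which you have) and it yields the strictly stronger conclusion that the parameter $t=\tau$ is \emph{lower} semicontinuous, not merely Borel. That extra information is absent from the paper and complements its closing remark nicely: the remark notes that the $\tau_n$ are upper semicontinuous but that increasing limits do not preserve this, and exhibits counterexamples to both semicontinuities of $\boldsymbol{\xi}$; your argument shows the underlying parameter is nonetheless always lower semicontinuous, which is consistent with those counterexamples because the map $t \mapsto \mathbf{c}+t(\mathbf{x}-\mathbf{c})$ can reverse the relevant order (e.g.\ when $x<c$ in one dimension). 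The trade-off is that the paper's route uses only countable unions and pointwise limits of measurable functions---the most elementary toolkit---whereas yours leans on the continuity of a minimum over a compact parameter set, which you correctly flag as the one point requiring care. The remaining steps (attainment of the infimum, membership of $\boldsymbol{\xi}(\mathbf{x})$ in $U$ by star-convexity, and continuity at $\mathbf{c}$ from $\|\boldsymbol{\xi}(\mathbf{x})-\mathbf{c}\|\leq\|\mathbf{x}-\mathbf{c}\|$) coincide with the paper's.
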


\begin{proof}
For $\mathbf{x} \in U$ and $t \in [0,1]$, define
\[
F(\mathbf{x}, t) = f(\mathbf{x}) - T_{\mathbf{c}}^k f(\mathbf{x}) - \\
\sum_{|\alpha| = k+1} \frac{1}{\alpha!} D^{\alpha}f(\mathbf{c} + t(\mathbf{x} - \mathbf{c}))(\mathbf{x}-\mathbf{c})^{\alpha}.
\]
With this notation, Taylor's theorem can be restated as guaranteeing that for all $\mathbf{x} \in U$ the equation $F(\mathbf{x}, t)=0$ has at least one solution in $t \in [0,1]$. Define

\begin{equation}
\label{eq:tau}
\tau(\mathbf{x}) = \inf \{ t \in [0, 1] : F(\mathbf{x}, t) = 0 \}.
\end{equation}
By continuity of $F$, $\tau$ verifies $F(\mathbf{x}, \tau(\mathbf{x})) = 0$. We claim that $\tau$ is also measurable: to prove this we will construct a sequence of measurable functions $\tau_n$ converging pointwise to $\tau$.
Define
\begin{equation}
\label{eq:taun}
\tau_n(\mathbf{x}) = \inf \{ q \in \mathbb{Q} \cap [0, 1] : |F(\mathbf{x}, q)| < 1/n \}.
\end{equation}
$\tau_n$ is measurable since, for $t \leq 0$ and $t > 1$, $\{ \tau_n < t\}$ is equal to the empty set and $U$, respectively, while for $t \in (0,1]$ we have

\begin{equation}\label{eq:semicont}
\{ \tau_n < t \} =
\bigcup_{q \in \mathbb{Q} \cap [0, t)} \{
\mathbf{x} : |F(\mathbf{x},q)| < 1/n \},
\end{equation}

In addition, $n \mapsto \tau_n(\mathbf{x})$ is non-decreasing in $n$ for all $\mathbf{x} \in U$, so the limit function $\overline{\tau}(\mathbf{x}) = \lim_{n \to \infty} \tau_n(\mathbf{x})$ exists everywhere, is measurable and verifies $F(\mathbf{x}, \overline{\tau}(\mathbf{x})) = 0$ by continuity of $F$. 

We claim that $\overline{\tau} = \tau$: Fix $\mathbf{x} \in U$. Since $\overline{\tau}(\mathbf{x})$ belongs to the right-hand side of \eqref{eq:tau}, we have $\tau(\mathbf{x}) \leq \overline{\tau}(\mathbf{x})$. Vice versa, take $t \in [0, 1]$ such that $F(\mathbf{x}, t) = 0$. If $t$ is rational then it belongs to the right-hand side of \eqref{eq:taun} for all $n \in \mathbb{N}$; if it is irrational we can still find, for each $n \in \mathbb{N}$, a $q_n \in \mathbb{Q} \cap [0,1]$ such that $q_n < t$ and $|F(\mathbf{x}, q_n)| < 1/n$. In both cases we have $\tau_n(\mathbf{x}) \leq t$ and letting $n \to \infty$ gives us $\overline{\tau}(\mathbf{x}) \leq t$. Since $t$ was arbitrary in the right-hand side of \eqref{eq:tau}, we conclude  that $\overline{\tau}(\mathbf{x}) \leq \tau(\mathbf{x})$.

Finally, simply define $\boldsymbol{\xi}(\mathbf{x}) = \mathbf{c} + \tau(\mathbf{x}) (\mathbf{x} - \mathbf{c})$. The only non-trivial statement left to verify is continuity at $\mathbf{c}$. Notice that $\boldsymbol{\xi}(\mathbf{c}) =\mathbf{c}$, since $\tau (\mathbf{c}) = 0$ by construction. Continuity then follows immediately from the fact that $|\boldsymbol{\xi}(\mathbf{x}) - \mathbf{c}| \leq |\mathbf{x} - \mathbf{c}|$, again by construction.
\end{proof}

\begin{remark}
A careful reading of the previous proof shows that we actually never used the fact that $\mathbf{c}$ was fixed. In fact, we could rewrite each step, letting all functions defined depend on both $\mathbf{x}$ and $\mathbf{c}$, and all statements would remain valid. Thus, we have actually proved a slightly stronger result, which comes in handy in situations like the proof of It\={o}'s lemma from the Introduction:
\end{remark}

\begin{theorem}[Measurable Taylor's Theorem, symmetric version]
\label{thm:MTTlagrangeSymm}
Let $k \in \mathbb{N}$, $U \subseteq \mathbb{R}^N$ open and convex, and $f: U \to \mathbb{R}$ of class $C^{k+1}$. Then there exists a Borel-measurable function $\boldsymbol{\xi}: U \times U \to U$ such that for all $\mathbf{x}, \mathbf{y} \in U$, $\boldsymbol{\xi}(\mathbf{x}, \mathbf{y})$ lies on the segment of endpoints $\mathbf{x}$ and $\mathbf{y}$ and satisfies

\begin{equation}
\label{eq:lagXiSymm}
 f(\mathbf{x}) = T_{\mathbf{y}}^k f(\mathbf{x}) + 
  \sum_{|\alpha| = k+1} \frac{1}{\alpha!} D^{\alpha}f(\boldsymbol{\xi}(\mathbf{x}, \mathbf{y}))(\mathbf{x}-\mathbf{y})^{\alpha}
\end{equation}
\end{theorem}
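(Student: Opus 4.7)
The plan is essentially to rerun the proof of Theorem~\ref{thm:MTTlagrange} with $\mathbf{c}$ demoted from a fixed parameter to a free variable $\mathbf{y}$, as the Remark suggests. The convexity of $U$ plays the role that star-convexity played before: for every pair $(\mathbf{x}, \mathbf{y}) \in U \times U$, the segment joining them lies in $U$, so that Theorem~\ref{thm:RTTlagrange} is available pointwise and the candidate selector will automatically take values in $U$.

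Concretely, I would define the jointly continuous function
\[
F(\mathbf{x}, \mathbf{y}, t) = f(\mathbf{x}) - T_{\mathbf{y}}^k f(\mathbf{x}) - \sum_{|\alpha| = k+1} \frac{1}{\alpha!} D^{\alpha}f\bigl(\mathbf{y} + t(\mathbf{x} - \mathbf{y})\bigr)(\mathbf{x}-\mathbf{y})^{\alpha}
\]
on $U \times U \times [0,1]$. Continuity in the joint variable $(\mathbf{x}, \mathbf{y}, t)$ follows from $f \in C^{k+1}$ and the fact that the segment $\mathbf{y} + t(\mathbf{x}-\mathbf{y})$ stays in $U$ by convexity. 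Theorem~\ref{thm:RTTlagrange} then tells us that for each fixed $(\mathbf{x}, \mathbf{y})$ the set $\{t \in [0,1] : F(\mathbf{x}, \mathbf{y}, t) = 0\}$ is nonempty, so I can set
\[
\tau(\mathbf{x}, \mathbf{y}) = \inf\{t \in [0,1] : F(\mathbf{x}, \mathbf{y}, t) = 0\}
\]
and, by continuity of $F$ in $t$, $F(\mathbf{x}, \mathbf{y}, \tau(\mathbf{x}, \mathbf{y})) = 0$.

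For measurability I would copy the rational-approximation trick: define
\[
\tau_n(\mathbf{x}, \mathbf{y}) = \inf\bigl\{q \in \mathbb{Q} \cap [0,1] : |F(\mathbf{x}, \mathbf{y}, q)| < 1/n\bigr\}.
\]
The key point is that for each fixed rational $q$ the map $(\mathbf{x}, \mathbf{y}) \mapsto F(\mathbf{x}, \mathbf{y}, q)$ is continuous, so each set $\{(\mathbf{x}, \mathbf{y}) : |F(\mathbf{x}, \mathbf{y}, q)| < 1/n\}$ is open in $U \times U$, and the same sublevel-set description as \eqref{eq:semicont} shows $\tau_n$ is Borel. The sequence is monotone nondecreasing in $n$, hence $\overline{\tau} = \lim_n \tau_n$ is Borel, and the identical two-sided inequality argument from the original proof yields $\overline{\tau} = \tau$. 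Setting $\boldsymbol{\xi}(\mathbf{x}, \mathbf{y}) = \mathbf{y} + \tau(\mathbf{x}, \mathbf{y})(\mathbf{x} - \mathbf{y})$ then gives a Borel-measurable selector with values in $U$ satisfying \eqref{eq:lagXiSymm}.

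I do not expect any serious obstacle: the only place where one must be slightly careful is to verify that joint continuity of $F$ in $(\mathbf{x}, \mathbf{y}, t)$ survives so that the rational sublevel sets remain open in the product space $U \times U$, and to invoke convexity (rather than star-convexity with respect to a distinguished point) to keep the interpolation segment inside $U$ uniformly in $(\mathbf{x}, \mathbf{y})$. Everything else, including the equality $\overline{\tau} = \tau$, transfers verbatim.
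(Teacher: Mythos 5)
Your proposal is correct and is exactly the argument the paper intends: its ``proof'' of Theorem~\ref{thm:MTTlagrangeSymm} is just the preceding Remark observing that the proof of Theorem~\ref{thm:MTTlagrange} never uses that $\mathbf{c}$ is fixed, so one may let every object depend on $(\mathbf{x},\mathbf{y})$ verbatim. You have merely written out the details (joint continuity of $F$ on $U\times U\times[0,1]$, openness of the rational sublevel sets in the product space, convexity keeping the segment in $U$) that the paper leaves implicit, and all of those checks are sound.
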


\section{Example}
\begin{figure}
\centering
\includegraphics[scale=0.7]{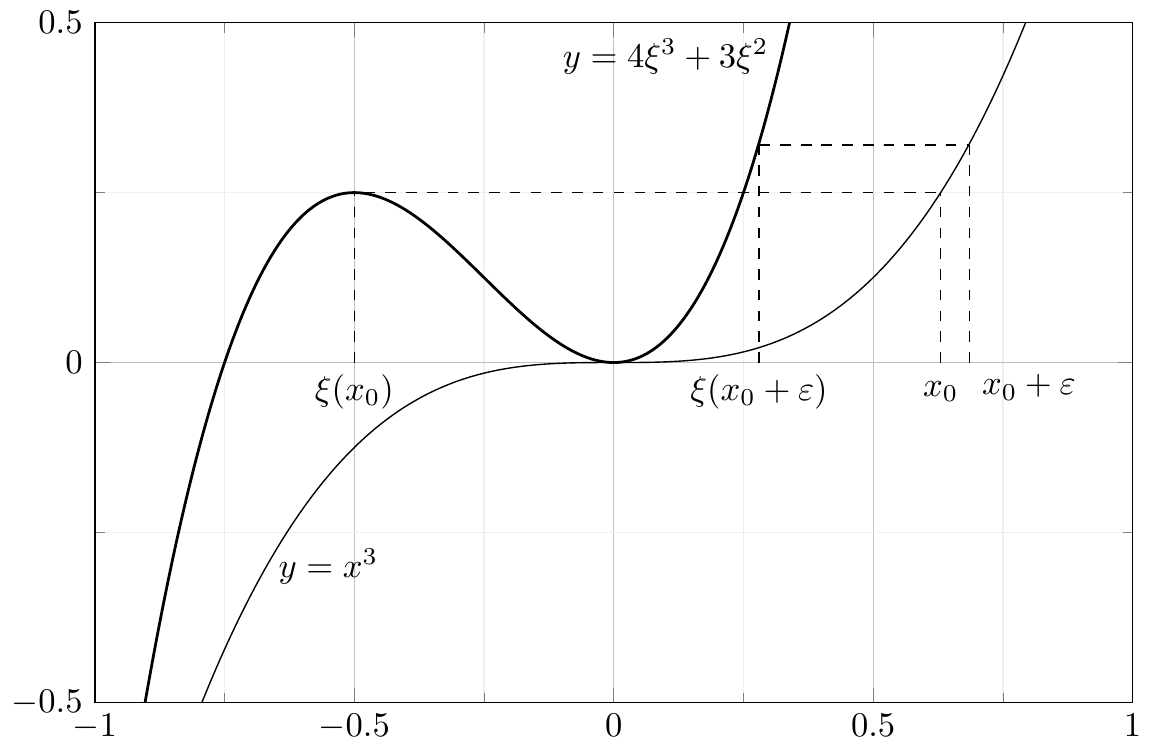}
\caption{Example of a discontinuous  $\xi(x)$}
\label{fig:counterex}
\end{figure}

\label{sec:EXample}
In general, the function we constructed is not guaranteed to be continuous at points different from $\mathbf{c}$. For a counterexample consider the case $k=1$, $U = \mathbb{R}$, $f(x) = x^3(x+1)$ and $c = -1$. Then for $x > -1$, $\xi(x)$ is defined as the least solution in $[-1, x]$ to the equation
\begin{equation}\label{eq:exmeastt}
  f'(\xi) = \frac{f(x) - f(-1)}{x+1},
\end{equation}
that is to say
\begin{equation}
\label{eq:counterex}
  4\xi^3 + 3\xi^2  =x^3.
\end{equation}
A glance at the graphs of the two polynomials in Figure~\ref{fig:counterex} shows that $\xi(x)$ must have a jump at $x_0 = 1/ \sqrt[3]{4}$ where the right-hand side of \eqref{eq:counterex} is equal to $1/4$. In fact, for values slightly larger than $x_0$, say $x_0 + \varepsilon$, we have only one (positive) $\xi$ satisfying \eqref{eq:counterex}, while for $x_0$ we also have a negative solution, so that $\xi(x_0) < 0 < \xi(x_0^+)$.

\begin{remark}
Looking at the previous example and proof one might be tempted to assume that, at least in the one-dimensional case, $\xi$ is semicontinuous. In fact, the approximating functions $\tau_n$ are upper semicontinuous by Equation~\ref{eq:semicont}, but unfortunately, they \emph{increase} to $\tau$, whereas upper semicontinuity is preserved only by \emph{decreasing} limits.

More generally, the previous example shows that $\xi$ cannot be guaranteed to be upper semicontinuous, exhibiting a case where $\limsup_{x \to x_0} \xi(x) = \xi(x_0^+) > 0 >  \xi(x_0)$. It is not difficult to construct an example where $\liminf_{x \to x_0} \xi(x) = \xi(x_0^-) < 0 <  \xi(x_0)$, thus showing that neither lower semicontinuity holds in general: it is enough to reflect the previous example across the $y$-axis, \emph{i.e.}, take $g(x) = f(-x)$, $c = 1$ and $x_0 = -1/ \sqrt[3]{4}$. In fact, in this case, for $x < 1$, $\xi(x)$ will be the largest solution in $[x, 1]$ to the analog of Equation~\ref{eq:exmeastt} and $\xi$ will be right-continuous at $x_0$ but not left-continuous.
\end{remark}

\end{document}